\newcommand{\be}{\begin{equation}}
\newcommand{\ee}{\end{equation}}
\newcommand{\beq}{\begin{eqnarray}}
\newcommand{\eeq}{\end{eqnarray}}
\newtheorem{remark}{Remark}
\newtheorem{proposition}{Proposition}
\newtheorem{property}{Property}
\begin{document}

\newcommand{\Abs}[1]{\left| #1 \right|}

\newcommand{\arctanh}{\operatorname{arctanh}}

\newcommand{\arcsinh}{\operatorname{arcsinh}}

\newcommand{\arccosh}{\operatorname{arccosh}}

\newcommand{\tc}[2]{\textcolor{#1}{#2}}

\newcommand{\Nat}{\mathbb{N}}
\newcommand{\Int}{\mathbb{Z}}
\newcommand{\Real}{\mathbb{R}}
\newcommand{\Comp}{\mathbb{C}}

\newcommand{\TemperleyLieb}{{\operatorname{TL}}}
\newcommand{\PeriodicTemperleyLieb}{{\operatorname{pTL}}}

\newcommand{\Hamiltonian}{\mathcal{H}}
\newcommand{\HamiltonianEven}{\Hamiltonian_{\text{e}}}
\newcommand{\HamiltonianOdd}{\Hamiltonian_{\text{o}}}
\newcommand{\HamiltonianFloquet}{\Hamiltonian_F}

\newcommand{\TransferMatrix}{ U_F }

\newcommand{\ChargeH}{ {\mathcal Q} }
\newcommand{\ChargeT}{ Q }

\newcommand{\BoostOperator}{{\mathcal{B}}}

\newcommand{\ad}{\operatorname{ad}}
\newcommand{\Ad}{\operatorname{Ad}}

\newcommand{\gl}{\mathfrak{gl}}
\newcommand{\su}{\mathfrak{su}}
\newcommand{\slAlg}{\mathfrak{sl}}
\newcommand{\slTwo}{\slAlg(2)}

\newcommand{\tl}{\mathfrak{tl}}
\newcommand{\tlpm}{\tl_\pm}

\newcommand{\q}{\mathfrak{q}}
\newcommand{\qp}{\q_+}
\newcommand{\qm}{\q_-}
\newcommand{\qe}{\q_{\text{e}}}
\newcommand{\qo}{\q_{\text{o}}}

\newcommand{\qt}{\tilde{\q}}
\newcommand{\qtp}{\qt_+}
\newcommand{\qtm}{\qt_-}
\newcommand{\qte}{\qt_{\text{e}}}
\newcommand{\qto}{\qt_{\text{o}}}

\begin{center}{\Large \textbf{
Integrable Floquet systems related to logarithmic conformal field theory
}}\end{center}

\begin{center}
Vsevolod I. Yashin\textsuperscript{1,$\diamond$}, Denis V. Kurlov\textsuperscript{2,3}, Aleksey K. Fedorov\textsuperscript{2,3}, Vladimir Gritsev\textsuperscript{4,2}
\end{center}

\begin{center}
\textsuperscript{1} Steklov Mathematical Institute of Russian Academy of
Sciences, \\
Gubkina str., 8, Moscow 119991, Russia
\\
\textsuperscript{2} Russian Quantum Center, Skolkovo, Moscow 143025, Russia
\\
\textsuperscript{3} National University of Science and Technology ``MISIS'', \\ Moscow 119049, Russia
\\
\textsuperscript{4} Institute for Theoretical Physics, Universiteit van Amsterdam, \\
Science Park 904, Postbus 94485, 1090 GL Amsterdam, The Netherlands
\\
\vspace{0.2cm}
\textsuperscript{$\diamond$}\href{mailto:viyashin@protonmail.com}{viyashin@protonmail.com}
\end{center}


\section*{Abstract}
{\bf
We study an integrable Floquet quantum system related to lattice statistical systems in the universality class of dense polymers. These systems are described by a particular non-unitary representation of the Temperley-Lieb algebra. We find a simple Lie algebra structure for the elements of Temperley-Lieb algebra which are invariant under shift by two lattice sites, and show how the local Floquet conserved charges and the Floquet Hamiltonian are expressed in terms of this algebra. The system has a phase transition between local and non-local phases of the Floquet Hamiltonian. We provide a strong indication that in the scaling limit this non-equilibrium system is described by the logarithmic conformal field theory.
}

\noindent\rule{\textwidth}{1pt}
\tableofcontents\thispagestyle{fancy}
\noindent\rule{\textwidth}{1pt}


\section{Introduction}
\label{sec:introduction}

Integrable systems play a tremendous role in our understanding of many-body statistical classical and quantum systems. A great number of conceptual insights had emerged from the notable examples of exactly solvable models. For instance, the solution of the two-dimensional (2D) Ising model by Onsager \cite{Onsager} has eventually led to the concepts of scaling and universality. In addition, the 2D Ising model became a benchmark for the renormalization group technique and various numerical methods. Later on, a multi-state generalization of the Ising model, the so-called Potts model, has been solved in some cases \cite{TL_1971} and has revealed a great amount of interesting mathematics, e.g., Tutte and chromatic polynomials from graph theory and the Temperley-Lieb algebras \cite{martin1991potts}, just to mention a few. A particular case of the latter one is the central object of this paper.

The Temperley-Lieb (TL) algebra (formally defined below in Section \ref{sec:temperley_lieb}) has one free parameter~$\beta$ which eventually defines its representations.
Many {\it different} realizations of the TL algebra in terms of physically-interesting objects can have the same value of~$\beta$. In particular, for  $\beta=\sqrt{2}$ there is a representation related to the quantum Ising chain, while for the representation that corresponds to the isotropic Heisenberg spin-$1/2$ chain (XXX model) one has~$\beta=2$. Here we are concerned with the case of $\beta=0$. The TL generators in this case have a representation in terms of a supersymmetric spin chain related to the $\mathfrak {gl}(1|1)$~algebra~\cite{Read_2007}. It is a well-known fact~\cite{Read_2007, Saleur_1992} that the continuum limit of this spin chain provides a realization for the logarithmic conformal field theory (Log-CFT) with the central charge~$c=-2$. This field theory appears in the scaling limit of critical dense polymers \cite{Pearce_2006, Pearce_2007}. This Log-CFT is also a theory of the so-called symplectic fermions introduced in Refs.~\cite{Kausch_1995, Kausch_2000}. The structure of the TL algebra at~$\beta = 0$, its continuum limit, and the algebraic structure of the continuum theory have been intensively studied in a series of works \cite{Gainutdinov_Read_Saleur_2013_1, Gainutdinov_Read_Saleur_2013_2, Gainutdinov_2013_3, Gainutdinov_2013_4, Gainutdinov_2013}, see also \cite{romain:tel-00876155} and \cite{Gainutdinov_2013_5} for a nice overview of these developments.

Motivated by the historical line of thoughts on the importance of integrable models, we introduce an integrable  quantum Floquet dynamics \cite{Gritsev_Polkovnikov_2017} (see also \cite{Kurlov} for the recent developments) with the aim to understand periodically driven many-body systems exactly. The two-step protocol described below in Section \ref{sec:integrable_floquet_systems} has a very close resemblance with integrable lattice models in the brick-wall-like representation of Baxter \cite{Baxter:1982zz}. Indeed, after an appropriate analytic continuation, the logarithm of the transfer matrix can be identified with a {\it quantum} Floquet Hamiltonian, defined below (Section \ref{sec:integrable_floquet_systems}).
Obviously, the transfer matrix of a classical lattice model is a non-local object. This preclude an immediate writing down of analytic expression for the Floquet Hamiltonian.
Using the map outlined above the Floquet Hamiltonian can be expressed in terms of an infinite number of conserved charges.

The problem of finding all the conserved charges for a generic integrable Floquet protocol based on the TL algebra seems to be intractable \footnote{However, several lowest charges can be obtained quite easily. We would like to thank Prof. Jesper Lykke Jacobsen for interesting communications on this point.}. In the recent paper \cite{Nienhuis_Huijgen_2021} the conserved charges for anisotropic Heisenberg model have been computed in terms of the TL generators (in the basis of irreducible words on the algebra). To construct a logarithm of the transfer matrix one should sum up these charges with powers of a formal parameter, but this seems intractable at the moment. We note that our construction presented in this paper is different and is motivated by the Floquet construction and by application of Lie-algebraic techniques.

In the present work we demonstrate that for $\beta=0$ the conserved charges and the Floquet Hamiltonian can be computed in the closed form. Furthermore, we show that conserved charges lie inside an infinite dimensional $\slTwo$ loop algebra, which could be useful to better understand the Log-CFT at $c=-2$. For the representation in terms of symplectic fermions one can diagonalize the Floquet Hamiltonian exactly, see Section~\ref{sec:representation}.
In addition, we find some sort of a phase transition, which is related to the convergence of the series defining the Floquet Hamiltonian. We are tempted to interpret it in terms of the locality-nonlocality transition, similar to the case of the Floquet $XY$ model~\cite{Caux}.

\section{Integrable Floquet dynamics} \label{sec:integrable_floquet_systems}
We study systems with periodic alteration between two Hamiltonians $\HamiltonianEven$ and $\HamiltonianOdd$ that act for duration $T_{1}$ and $T_{2}$, correspondingly. The total period of the system is $T = T_{1} + T_{2}$. This (the so-called two-step) protocol is a quite generic setup describing a Floquet (time-periodic) driven many-body quantum system,
\begin{equation} \label{H_two_step}
  \Hamiltonian(t) =
  \begin{cases}
    \HamiltonianEven, & n T \leq t \leq n T + T_{1}, \\
    \HamiltonianOdd, & n T + T_{1} \leq t \leq n T + T_{1} + T_{2}. \\
  \end{cases}
  , \quad n \in \mathbb{Z}.
\end{equation}
The two-step protocol~(\ref{H_two_step}) can be pictorially represented as shown in Fig.~\ref{fig:2step} (see also Ref.~\cite{Gritsev_Polkovnikov_2017}).
The stroboscopic time evolution of the system~(\ref{H_two_step}) is then governed by the operator~$U_F$, given by
\be
  \TransferMatrix = \exp(-i T_{1} \HamiltonianEven )\exp(-i T_{2} \HamiltonianOdd) \equiv \exp(-i T \HamiltonianFloquet ),
  \label{2step}
\ee
where~$\HamiltonianFloquet$ is the effective {\it time-independent} Floquet Hamiltonian.
In order to compute the Floquet Hamiltonian, one can use the Baker-Campbell-Hausdorff (BCH) formula:
\begin{equation}
  \log\left( e^X e^Y \right) = X + Y + \frac{1}{2}[X,Y] + \frac{1}{12}\Bigl( \bigl[ X,[X,Y] \bigr] + \bigl[ Y,[Y,X] \bigr] \Bigr) + \ldots.
\end{equation}
However, in most cases it is impossible to sum the BCH series in a closed operatorial form. In this paper we present one of the rarest examples when this task can be accomplished.

\begin{figure}[h]
  \centering
  \includegraphics[width=0.6\textwidth]{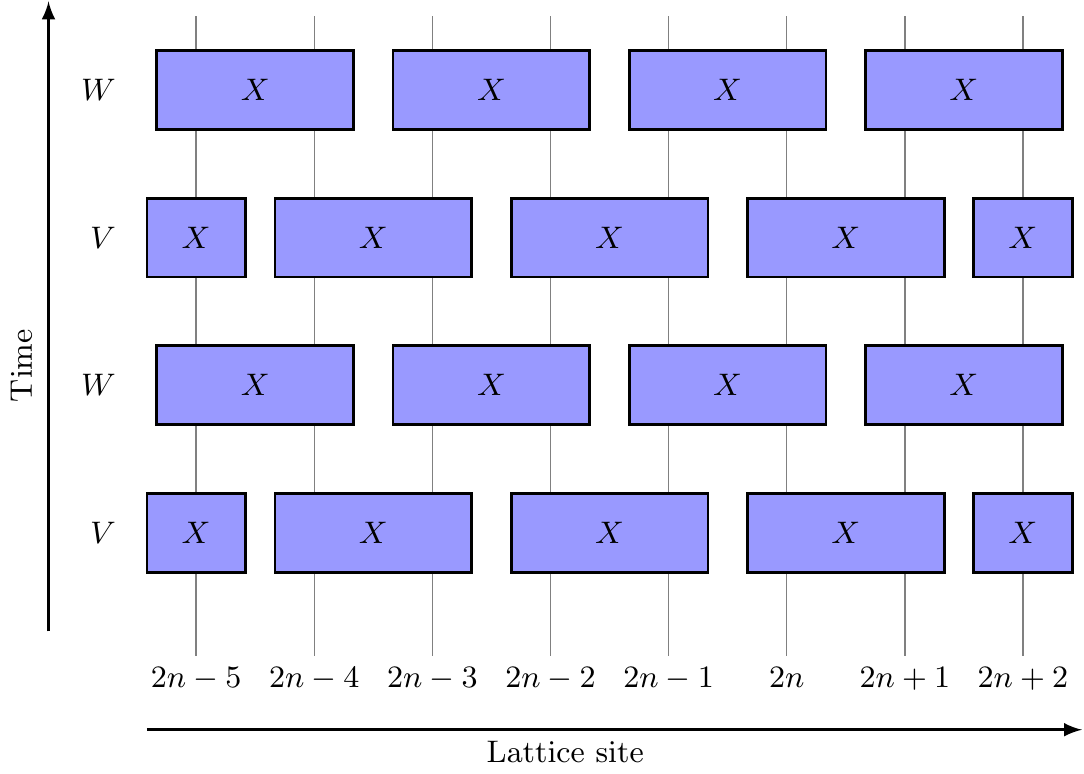}
  \caption{
  Brick-wall protocol. A space-time picture of the lattice version of the integrable two-step Floquet protocol for the Floquet dynamics. System evolves for time $T_{1}$ with a Hamiltonian $\HamiltonianEven$ (for the layer $V$) and for the time $T_2$ with a Hamiltonian $\HamiltonianOdd$ (for the layer $W$). Layers $V$ and $W$ correspond to the evolution operators $\exp(-iT_{1}\HamiltonianEven)$ and $\exp(-iT_{2}\HamiltonianOdd)$ respectively. Here $X$ denotes an operator acting between two neighboring lattice cites while satisfying the Yang-Baxter equation. A particular case of the latter is provided by the TL algebra, e.g. by the $\TemperleyLieb_N(0)$  }
  \label{fig:2step}
\end{figure}

We remind that the operator $\mathcal{Q}$ is called a \emph{conserved charge} if it commutes with the Hamiltonian $\Hamiltonian$, thus being a time-independent quantity. One of the definitions of quantum integrability (see Ref.~\cite{Caux_2011} for extensive discussion on the notions of integrability in quantum systems) is that the Hamiltonian system is considered to be \emph{integrable} if it wields the complete set of mutually commuting charges $\{\mathcal{Q}_n\}$. In this case it is possible to fully characterize the evolution of a system, namely to find its eigensystem. Similarly, the Floquet-system is called \emph{Floquet-integrable}, if it contains the full set of (a sufficient number of) operators $\{ Q_n\}$ that commute with the Floquet Hamiltonian $H_F$ (equivalently, with the Floquet evolution operator $\TransferMatrix$). These operators have stroboscopic-time  conservation: they form conserving family at times $mT$, where $m\in\mathbb{Z}$. Some general considerations about Floquet integrable models may be found in  \cite{Gritsev_Polkovnikov_2017}.

We should mention here that our two-step Temperley-Lieb algebraic Floquet protocol is conjectured to be integrable for some special points in the $T_{1}-T_{2}$ parameter space, in particular for $T_1 = T_2$ for generic $\beta$~\cite{Kurlov}.
However, in this paper we show that for $\beta =0$ the two-step protocol is integrable regardless of what the values of~$T_1$ and~$T_2$ are.

\section{Temperley-Lieb algebra and commuting Floquet charges } \label{sec:temperley_lieb}
In this Section we first define our construction in terms of the TL algebra. Then we establish an infinite-dimensional loop algebra for the charges that commute with the Floquet Hamiltonian and each other.

\subsection{Floquet protocol in terms of the Temperley-Lieb algebra}

The Temperley-Lieb algebra $\TemperleyLieb_N(\beta)$ is an associative algebra that appears frequently in the context of various integrable models \cite{de_Groot_2015}. The algebra contains a free parameter $\beta \in {\mathbb C}$ and is generated by the elements~$\{e_i\}_{i=1}^{N-1}$ that satisfy the relations:
\begin{equation}
\begin{aligned}
  &e_i^2 = \beta e_i, \\
  &e_i e_{i\pm1} e_i = e_i, \\
  &e_i e_j = e_j e_i, \qquad |i-j| > 1.
\end{aligned}
\end{equation}
We are interested in the case of $\beta = 0$ that is related to the dimer representations of the Temperley-Lieb algebra \cite{Morin-Duchesne_Rasmussen_Ruelle_2015,de_Groot_2015}. In order to impose periodic boundary conditions and deal with translationally invariant systems, we include an additional generator~$e_0 \equiv e_N$ that satisfies~$e_0^2 = \beta e_0 = 0$ and the additional relations
\begin{equation}
  e_0 e_{J} e_0 = e_0, \quad e_{J} e_0 e_{J} = e_{J}, \qquad\qquad J \in \{1, N-1 \}.
\end{equation}
The resulting algebra generated by~$\{e_i\}_{i=0}^{N-1}$ is called the \emph{periodic Temperley-Lieb algebra} and denoted by~$\PeriodicTemperleyLieb_N(0)$, see e.g. Refs.~\cite{Pasquier:1989kd, Levy1991AlgebraicSO, Martin_1994} for review. The systems we are interested in are often invariant under the shift by two sites, therefore we require that $N$ is even. Also note that we aim to consider the thermodynamic limit, $N \rightarrow \infty$.

In terms of $\PeriodicTemperleyLieb_N(0)$, the Hamiltonians~$\HamiltonianEven$ and~$\HamiltonianOdd$ from Eq.~(\ref{H_two_step}) are written as
\begin{equation}
  \HamiltonianEven = \sum_{i=0}^{(N-2)/2} e_{2i} \equiv \sum_{i\text{ even}} e_i, \quad \HamiltonianOdd = \sum_{i=0}^{(N-2)/2} e_{2j+1} \equiv \sum_{i\text{ odd}} e_i,
\end{equation}
and the Floquet evolution operator is given by Eq.~(\ref{2step}).
In this identification the Floquet time evolution looks like a brick-wall protocol, see Fig.~\ref{fig:2step}.

Note that the algebraic structure of $\PeriodicTemperleyLieb_N(0)$ (here, $\beta=0$ is essential) is preserved under the following automorphisms
\begin{equation} \label{eq:automorphism}
  e_i \mapsto
  \begin{cases}
    t e_i, & \text{ for } i \text{ even},\\
    t^{-1} e_i, & \text{ for } i \text{ odd}.\\
  \end{cases}
  ,\quad
  t \in \Comp\setminus\{0\}.
\end{equation}
Let us denote
\be \label{tau_def}
    \tau = \sqrt{T_{1}T_{2}}, \qquad z = -i\tau.
\ee
The number $\tau$ can be understood as the ``averaged'' period of the protocol, and $z$ is its Wick rotation.
Thus, for later convenience we redefine the generators $e_i$ using the automorphism~(\ref{eq:automorphism}) with the parameter~$t = \sqrt{T_{2}/T_{1}}$. Therefore, we are interested in examining the Floquet evolution operator of form
\begin{equation}
  \TransferMatrix(z) = \exp(z \HamiltonianEven)\exp(z \HamiltonianOdd) = \exp(z \HamiltonianFloquet(z)).
\end{equation}
Let us also mention that the average Hamiltonian $\Hamiltonian$ equals $\HamiltonianFloquet$ in the Trotter limit
\begin{equation} \label{H_avg_gen}
  \Hamiltonian = \HamiltonianEven + \HamiltonianOdd = \lim_{z\rightarrow 0} \HamiltonianFloquet(z).
\end{equation}

\subsection{Lie algebraic structure of Temperley-Lieb algebra at $\beta=0$} \label{subsec:Lie_alg_of_pTL0}
The algebra $\PeriodicTemperleyLieb_N(0)$ has a number of nice properties. In particular, it turns out to have a rather convenient Lie algebra of commutators, see Appendix~\ref{appendix:lie_algebra} for further details. We denote the Lie algebra of commutators as
\begin{equation}
  \tl = \mathrm{Lie}\Bigl( \PeriodicTemperleyLieb_N(0) \Bigr).
\end{equation}
Let us introduce the generators $\q_i^m$, which correspond to the Lie polynomials of degree $m$ labelled by the lattice site~$i$:
\begin{equation} \label{q_i_m_def}
  \q_i^0 = 1, \quad \q_i^1 = e_i, \quad \q_i^m = [e_i,[e_{i+1},\cdots,[e_{i+m-2},e_{i+m-1}]\cdots]].
\end{equation}
The generators $\q_i^m$ span the algebra $\tl$. One should be careful with the fact that $\q_i^m$ are defined only for $m < N$. However, in the case $N\rightarrow\infty$ that we are interested in, this does not lead to confusion.

Then, let us consider the subalgebra $\tlpm \subset \tl$ of generators invariant under the shift by two lattice sites. It consists of generators $\qp^m$ and $\qm^m$ defined as
\begin{equation} \label{q_pm_m_def}
  \qp^m = \sum_i \q_i^m, \quad \qm^m = \sum_i (-1)^i \q_i^m.
\end{equation}
For our purposes, it is also convenient to use the following basis:
\begin{equation} \label{q_eo_m}
  \qe^m = \frac{1}{2}(\qp^m + \qm^m) = \sum_{i\text{ even}}\q_i^m, \quad \qo^m = \frac{1}{2}(\qp^m - \qm^m) = \sum_{i\text{ odd}}\q_i^m.
\end{equation}
Indeed, in terms of the operators~(\ref{q_eo_m}) the Hamiltonians $\Hamiltonian, \HamiltonianEven, \HamiltonianOdd$ are given by
\begin{equation}
  \Hamiltonian = \qp^1, \quad \HamiltonianEven = \qe^1, \quad \HamiltonianOdd = \qo^1.
\end{equation}
One can check (see Appendix~\ref{appendix:lie_algebra}) that the following identities hold
\be \label{eq:most_important_equation}
\begin{gathered}\relax
  [\HamiltonianEven,\HamiltonianOdd] = \qm^2, \quad [ {\cal H}_{\alpha}, \q_+^{2s} ] = 0, \quad [ {\cal H}_{\alpha}, \q_{\alpha}^{2s+1} ] = 0, \\
\begin{aligned}
  & [\HamiltonianEven,\qo^{2s+1}] = \qm^{2s+2}+\qm^{2s}, & \quad & [\HamiltonianOdd,\qe^{2s+1}] = -(\qm^{2s+2}+\qm^{2s}), \\
  & [\HamiltonianEven,\qm^2] = -2(\qe^3 + 2\qe^1), & \quad & [\HamiltonianOdd,\qm^2] = 2(\qo^3 + 2\qo^1), \\
  & [\HamiltonianEven,\qm^{2s}] = -2(\qe^{2s+1} + \qe^{2s-1}), & \quad & [\HamiltonianOdd,\qm^{2s}] = -2(\qo^{2s+1} + \qo^{2s-1}), \\
\end{aligned}
\end{gathered}
\ee
where~$s>0$ and $\alpha \in \{ \text{e},\text{o} \}$.
Let us then introduce additional operators
\be \label{tilde_q_m}
    \tilde\q_{\beta}^m = \sum_{l=0}^{\lfloor\frac{m-1}{2}\rfloor} \binom{m-1}{m-1-l} \q_{\beta}^{m-2l}, \qquad \beta \in \{ \text{e},\text{o},+,- \}.
\ee
The operators~(\ref{tilde_q_m}) are very convenient for examining the structure of Lie algebra $\tlpm$ (see Appendix~\ref{appendix:lie_algebra}).
Note that the subalgebra $\{\qtp^{2s+2}\}_s$ for $s\geq 0$ is a center, i.e. these operators commute with all elements in $\tlpm$.
One can also check that the three subalgebras $\{\qte^{2s+1}\}_s$, $\{\qto^{2s+1}\}_s$, $\{\qtm^{2s+2}\}_s$ are commutative and maximal. Now, let us define
\begin{equation} \label{eq:definition_loop_algebra}
  H^c = \qtm^{2c}, \qquad E^a = \qte^{2a+1}, \qquad F^b = \qto^{2b-1},
\end{equation}
where $a=0,1,\dots$ and $b,c=1,2,\dots$. Quite remarkably, the operators~(\ref{eq:definition_loop_algebra}) turn out to satisfy the relations for the $\slTwo$ loop algebra:
\begin{equation} \label{eq:relations_loop_algebra}
\begin{aligned}\relax
    [H^n,H^m] &= 0,        & [E^n,E^m] &= 0,          & [F^n,F^m] &= 0,  \\
    [H^n,E^m] &= 2E^{n+m},  & [H^n,F^m] &= -2F^{n+m},  &  [E^n,F^m] &= H^{n+m},\\
\end{aligned}
\end{equation}
which is a central result of this subsection.
Thus, we have obtained that the Lie algebra $\tlpm$ is decomposed into a center $\{\qtp^{2s+2}\}_s$ and an algebra $\mathrm{Lie}(\HamiltonianEven,\HamiltonianOdd)$, which is a subalgebra of the $\slTwo$ loop algebra (a subalgebra of elements with positive loop parameters), and one has
\begin{equation}
  \HamiltonianEven = E^0, \qquad \HamiltonianOdd = F^1.
\end{equation}

Finally, the loop algebra relations~(\ref{eq:relations_loop_algebra}) may be expressed in terms of $\{\q_{\pm}^n\}_n$ so as to give
\begin{equation} \label{eq:relations_qtilde}
\begin{aligned}
    &[\qtp^n, \qtp^m] = 0, &\text{for any }n,m; \\
    &[\qtp^n, \qtm^m] = 0, &\text{for even }n\text{ and any }m; \\
    &[\qtp^n, \qtm^m] = -2\qtm^{n+m}, &\text{for odd }n\text{ and any }m; \\
    &[\qtm^n, \qtm^m] = 0, &\text{for }n,m\text{ both even or both odd}; \\
    &[\qtm^n, \qtm^m] = 2\qtp^{n+m}, &\text{for even }n\text{ and odd }m. \\
\end{aligned}
\end{equation}

\subsection{Floquet conserved charges} \label{subsec:floquet_conserved_charges}
Using the commutation relations presented in Eqs.~(\ref{eq:most_important_equation})--(\ref{eq:relations_loop_algebra}) here we find the set of local charges for the Hamiltonian $\Hamiltonian$ and the evolution operator $\TransferMatrix$.

\subsubsection{Charges of the average Hamiltonian}
Eq.~(\ref{eq:relations_qtilde}) directly implies that the set of commuting charges for the average Hamiltonian $\Hamiltonian = q_+^1$ in Eq.~(\ref{H_avg_gen}) is given by\footnote{Here, we equivalently could have taken $\qtp^m$ instead of $\qp^m$.}
\begin{equation}
  \ChargeH_m = \qp^m.
\end{equation}
The charges~${\cal Q}_m$ are referred to as {\it higher Hamiltonians} in Ref.~\cite{Gainutdinov_Read_Saleur_2013_1}.
\begin{remark}
  Note that if we disregard the boundary conditions, then this system has a boost operator $\BoostOperator = \sum_j j e_j$, such that
  \begin{equation}
    [\ChargeH_m,\BoostOperator] = m \ChargeH_{m+1} + (m-2)\ChargeH_{m-1}.
  \end{equation}
\end{remark}

\subsubsection{Charges of the evolution operator}
\begin{proposition}
 There is a set of local charges\footnote{We conjecture that this set is also complete.} for the evolution operator $\TransferMatrix(z)$, given by
\begin{equation} \label{ChargeT_to_tilde_q_pm}
\begin{aligned}
  \ChargeT_m &= \qtp^m,                            &&\text{ if } m \text{ even,} \\
  \ChargeT_m &= \qtp^m + \frac{z}{2} \qtm^{m+1},   &&\text{ if } m \text{ odd.}  \\
\end{aligned}
\end{equation}
\end{proposition}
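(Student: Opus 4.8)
The plan is to establish the equivalent statement $[\ChargeT_m, \TransferMatrix(z)]=0$ (which then also yields $[\ChargeT_m,\HamiltonianFloquet(z)]=0$), by computing the adjoint action of $\TransferMatrix(z)=e^{z\HamiltonianEven}e^{z\HamiltonianOdd}$ on each $\ChargeT_m$ entirely inside the $\slTwo$ loop algebra~(\ref{eq:relations_loop_algebra}), using $\HamiltonianEven=E^0$ and $\HamiltonianOdd=F^1$. It is convenient to rewrite the commutation condition $\TransferMatrix\,\ChargeT_m=\ChargeT_m\,\TransferMatrix$ in the ``meeting in the middle'' form
\begin{equation}
  \Ad_{e^{z\HamiltonianOdd}}\ChargeT_m=\Ad_{e^{-z\HamiltonianEven}}\ChargeT_m ,
\end{equation}
so that one only ever brackets with a single generator at a time.

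For even $m$ there is nothing to prove: $\ChargeT_m=\qtp^m$ with $m\ge 2$ lies in the center $\{\qtp^{2s+2}\}_s$ of $\tlpm$ (noted below Eq.~(\ref{tilde_q_m})), hence it commutes with $\HamiltonianEven,\HamiltonianOdd\in\tlpm$ and therefore with $\TransferMatrix(z)$.

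For odd $m=2k-1$ the dictionary~(\ref{eq:definition_loop_algebra}) together with $\qtp=\qte+\qto$ gives $\ChargeT_{2k-1}=E^{k-1}+F^k+\tfrac{z}{2}H^k$. The first step is to apply $\Ad_{e^{zF^1}}$: from $[F^1,E^{k-1}]=-H^k$, $[F^1,H^k]=2F^{k+1}$, $[F^1,F^k]=0$ and $[F^1,F^{k+1}]=0$, the adjoint series truncates at order $z^2$, and the $F^{k+1}$ contributions cancel precisely because of the coefficient $\tfrac{z}{2}$, leaving
\begin{equation}
  \Ad_{e^{zF^1}}\ChargeT_{2k-1}=E^{k-1}+F^k-\tfrac{z}{2}H^k .
\end{equation}
The second step is to apply $\Ad_{e^{-zE^0}}$; here $[E^0,F^k]=H^k$, $[E^0,H^k]=-2E^k$, $[E^0,E^{k-1}]=0$ and $[E^0,E^k]=0$, so the series again truncates at order $z^2$ and the $E^k$ contributions cancel, giving
\begin{equation}
  \Ad_{e^{-zE^0}}\ChargeT_{2k-1}=E^{k-1}+F^k-\tfrac{z}{2}H^k .
\end{equation}
The two results coincide, which is exactly $[\ChargeT_m,\TransferMatrix(z)]=0$. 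Locality is built in: every $\qtp^m$ and $\qtm^{m+1}$ is a fixed finite linear combination of the (anti)symmetrized sums $\q_\pm^{m-2l}$ of Lie words $\q_i^{m-2l}$, each supported on $O(m)$ consecutive sites, so $\ChargeT_m$ is an extensive sum of uniformly local densities. Mutual commutativity $[\ChargeT_m,\ChargeT_{m'}]=0$ follows from Eqs.~(\ref{eq:relations_qtilde}) by the same bookkeeping: for even indices the charges are central, for one even and one odd index the bracket vanishes termwise, and for two odd indices the two cross terms of order $z$ cancel against each other.

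The argument is essentially mechanical once the loop-algebra identification is available, so the only points that need care are (i) pinning down the index dictionary between $\qt_\pm^m$ and $E^a,F^b,H^c$ — in particular that the correction term sits at level $m+1$ and carries exactly the coefficient $z/2$ that makes both truncated BCH series land on the same element $E^{k-1}+F^k-\tfrac{z}{2}H^k$ — and (ii) checking that each chain of brackets with $F^1$ (respectively $E^0$) closes after two steps, which is what guarantees the series terminate. The same strategy should also produce the full commuting family once one controls which elements of $\tlpm$ survive the complete BCH resummation, which is presumably how one would settle the completeness conjecture in the footnote.
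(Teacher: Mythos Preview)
Your proof is correct and follows essentially the same approach as the paper: the even case by centrality of $\qtp^{2s}$, the odd case by rewriting $[\TransferMatrix,\ChargeT_m]=0$ as the ``meeting in the middle'' condition $e^{-z\ad_{\HamiltonianEven}}\ChargeT_m=e^{z\ad_{\HamiltonianOdd}}\ChargeT_m$, translating $\ChargeT_{2k-1}$ into the loop-algebra generators, and checking that both sides truncate to $E^{k-1}+F^k-\tfrac{z}{2}H^k$. Your additional remarks on locality and on the cancellation mechanism for $[\ChargeT_m,\ChargeT_{m'}]$ via Eq.~(\ref{eq:relations_qtilde}) are a bit more explicit than what the paper records, but the substance is the same.
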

\begin{proof}
  First, it is trivial to show that~$U_F$ commutes with~$\ChargeT_m$ for even m. Indeed, we know [see Eq.~(\ref{eq:relations_qtilde})], that the even charges $\ChargeT_m = \qtp^m$ commute with {\it any} element in the algebra, which obviously includes~$U_F$.
  Now, suppose $m$ is odd. Clearly, the requirement $[\TransferMatrix,\ChargeT_m] = 0$ is equivalent to
  \begin{equation} \label{comm_as_adj_action}
    e^{-z\ad_{\HamiltonianEven}}\ChargeT_m = e^{z\ad_{\HamiltonianOdd}}\ChargeT_m.
  \end{equation}
  Using Eqs.~(\ref{q_eo_m}),~(\ref{eq:definition_loop_algebra}), and~(\ref{ChargeT_to_tilde_q_pm}), one can easily see that in terms of the generators of the $\slTwo$ loop algebra the conserved charges read
  \be \label{chargeT_EFH}
    \ChargeT_{2s+1} = E^s+F^{s+1} + \frac{z}{2}H^{s+1}.
  \ee
  Then, keeping in mind that~${\cal H}_{\text{e}} = E^0$, ${\cal H}_{\text{o}} = F^1$, and using the following relations:
  \be
  \begin{aligned}
    & \ad_{E^0}\ChargeT_{2s+1} = H^{s+1} - z E^{s+1}, && \ad_{E^0}^2\ChargeT_{2s+1} = -2 E^{s+1}, &&& \ad_{E^0}^3\ChargeT_{2s+1} = 0,\\
    &\ad_{F^1}\ChargeT_{2s+1} = -H^{s+1} + z F^{s+2}, && \ad_{F^1}^2\ChargeT_{2s+1} = -2 F^{s+2}, &&& \ad_{F^1}^3\ChargeT_{2s+1} = 0,
  \end{aligned}
  \ee
  from Eqs.~(\ref{comm_as_adj_action}) and (\ref{chargeT_EFH}) we immediately obtain
  \be
  e^{-z\ad_{E^0}}\ChargeT_{2s+1} = e^{z\ad_{F^1}}\ChargeT_{2s+1} = E^s+F^{s+1}-\frac{z}{2}H^{s+1},
  \ee
  so that Eq.~(\ref{comm_as_adj_action}) is satisfied.
  It is also a straightforward check that all the charges~$Q_m$ commute with each other.
\end{proof}
Note that in the Trotter limit~$z \rightarrow 0$ the charges of the Floquet evolution operator and those of the average Hamiltonian are equivalent to each other, as expected.
\begin{remark}
  We note that the expression for the first commuting charge
  \be \label{Q_1_explicit}
    Q_{1} = E^0 + F^1 + \frac{z}{2} H^1 = \sum_{j=0}^{N-1} e_j - \frac{i \tau}{2}  \sum_{j=0}^{N-1} (-1)^j \left[ e_j, e_{j+1} \right],
   \ee
    as follows from Eq.~(\ref{chargeT_EFH}), coincides~\footnote{Note that the expression for $Q_1$ in Ref.~\cite{Kurlov} is derived for open boundary conditions, whereas here we are dealing with the periodic ones. While the expressions for the first conserved charge are the same in both cases (up to a trivial change of summation limits), this is no longer true for the higher order charges, since in the case of open boundary conditions there are also boundary terms present. Also note that because we count the lattice sites from zero, the term~$\sim \sum_j (-1)^j [e_j, e_{j+1}]$ in Eq.~(\ref{Q_1_explicit}) has a different sign from that in~Ref.~\cite{Kurlov}, where the  sites are counted from one.} with the the $\beta=0$ limit  of the corresponding expression derived for generic~$\beta$ and~$T_1 = T_2 = T$ [so that in Eq.~(\ref{Q_1_explicit}) $\tau = T$] in a recent paper~\cite{Kurlov}.
\end{remark}

\subsection{Floquet Hamiltonian}
Using the Baker-Campbell-Hausdorff (BCH) formula one can write the Floquet Hamiltonian~$\HamiltonianFloquet$ as the series expansion
\begin{equation} \label{eq:bch_series}
 \log \TransferMatrix(z) = z \HamiltonianFloquet(z) = \log(e^{z \HamiltonianEven}e^{z \HamiltonianOdd}) = \sum_{k=1}^\infty z^k Z_k,
\end{equation}
where $Z_k$ are the Lie polynomials of degree $k$ made of Hamiltonians $\HamiltonianEven,\HamiltonianOdd$. Then, taking into account the structure of the Lie algebra $\tlpm$, discussed in subsection~\ref{subsec:Lie_alg_of_pTL0}, in particular its relation to the~$\slTwo$ loop algebra, and using the symmetry
\begin{equation}
  Z_k(X,Y) = (-1)^{k+1}Z_k(Y,X)
\end{equation}
we find that the polynomial~$Z_k$ has explicit form given in Proposition~\ref{prop:Z_k_explicit}.

\begin{proposition} \label{prop:Z_k_explicit}
Lie polynomials~$Z_k$ in the BCH expansion for the $\mathfrak{sl}(2)$ loop algebra are explicitly given by
\begin{equation}
  Z_{2s+1} = \frac{(-1)^s}{2s+1}\frac{1}{\binom{2s}{s}} \qtp^{2s+1}, \qquad
  Z_{2s+2} = \frac{(-1)^s}{2s+2}\frac{1}{\binom{2s+1}{s}} \qtm^{2s+2}, \quad
  s=0,1,\dots
\end{equation}
\end{proposition}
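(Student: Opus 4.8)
The plan is to work entirely inside the $\slTwo$ loop algebra defined by Eqs.~(\ref{eq:definition_loop_algebra})--(\ref{eq:relations_loop_algebra}), where $\HamiltonianEven = E^0$ and $\HamiltonianOdd = F^1$, and to reduce the BCH problem to a computation in the three-dimensional Lie algebra $\slTwo$ itself. First I would observe that the subalgebra generated by $E^0$ and $F^1$ is graded by the loop parameter: a Lie monomial of degree $k$ in $E^0$ and $F^1$ that contains $p$ factors of $E^0$ and $q$ factors of $F^1$ (so $p+q=k$) lands in loop-degree $q$, hence it is a multiple of $E^{k-q}$, $F^q$, or $H^{\bullet}$ according to parity. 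Combined with the reflection symmetry $Z_k(X,Y)=(-1)^{k+1}Z_k(Y,X)$ and the fact that $E^0 \mapsto F^1$ under swapping is compatible with the shift $E^a \leftrightarrow F^{a+1}$, $H^c \mapsto -H^c$, this forces $Z_{2s+1}$ to be a scalar multiple of $E^s + F^{s+1}$ (the odd loop-degree piece) and $Z_{2s+2}$ to be a scalar multiple of $H^{s+1}$ (the even piece); translating back through Eq.~(\ref{eq:definition_loop_algebra}) and the definition~(\ref{tilde_q_m}) of the tilde operators, these are exactly $\qtp^{2s+1}$ and $\qtm^{2s+2}$ up to the scalars to be determined.

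To pin down the scalars, I would pass to the faithful $2\times 2$ representation of $\slTwo$ and use the standard trick of evaluating the loop algebra at a formal spectral parameter: send $E^a \mapsto \lambda^a E$, $F^b \mapsto \lambda^b F$, $H^c \mapsto \lambda^c H$, where $E,F,H$ are the usual generators with $[H,E]=2E$, $[H,F]=-2F$, $[E,F]=H$. Under this specialization $e^{zE^0}e^{zF^1}$ becomes $e^{zE}e^{z\lambda F}$, an explicit product of $2\times 2$ matrices, and $\log(e^{zE}e^{z\lambda F})$ can be computed in closed form (it is $\mu(z,\lambda)(E+\lambda F) + \nu(z,\lambda)H$ for scalar functions $\mu,\nu$ obtained by diagonalizing the resulting matrix; the relevant quantities are hyperbolic functions of $\sqrt{z^2\lambda}$ or similar). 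Expanding $\mu$ and $\nu$ as power series in $z$ and reading off the coefficient of $z^k \lambda^{s}$ then gives the numerical coefficients, which I expect to organize into the central binomial factors $\frac{(-1)^s}{2s+1}\binom{2s}{s}^{-1}$ and $\frac{(-1)^s}{2s+2}\binom{2s+1}{s}^{-1}$ claimed in the statement. This is essentially the classical generating-function identity behind the BCH series for $\slTwo$, e.g. the one yielding $\arctanh$/$\operatorname{arcsinh}$-type expansions.

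The one subtlety I would flag is legitimacy of the passage between the loop algebra and its $\lambda$-graded $2\times 2$ image: I need that the universal Lie-algebraic identity ``$Z_k = (\text{explicit element})$'' is detected by this particular representation, i.e. that the relevant quotient of the free Lie algebra on two generators injects into (the completion of) the loop algebra and that the loop algebra's positive-degree part injects into $\slTwo[\lambda]$. The grading argument of the first paragraph already shows $Z_k$ lies in a one-dimensional space (spanned by $\qtp^{2s+1}$ or $\qtm^{2s+2}$), so it suffices to match a single nonzero coordinate, and the $2\times 2$ calculation does exactly that provided the chosen coordinate functional is nonzero on the image — which it is, since $E,F,H$ are linearly independent. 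A cross-check at low order ($Z_1 = E^0+F^1 = \qtp^1$, $Z_2 = \tfrac12[E^0,F^1] = \tfrac12 H^1 = \tfrac12\qtm^2$, $Z_3 = -\tfrac{1}{12}\qtp^3 \cdot(\text{sign/normalization})$ against the stated $\frac{(-1)}{3}\binom{2}{1}^{-1} = -\tfrac16$, recalling $\qtp^3 = \qp^3 + 2\qp^1$ so the $e$-level coefficient is $-\tfrac{1}{12}$, matching $\frac{1}{12}\bigl([X,[X,Y]]+[Y,[Y,X]]\bigr)$) confirms the normalization and the definition~(\ref{tilde_q_m}) is precisely what absorbs the lower-degree corrections. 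The main obstacle is therefore purely computational: carrying out the $2\times 2$ matrix logarithm and correctly extracting the double power-series coefficients, rather than any conceptual gap.
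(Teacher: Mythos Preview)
Your proposal is correct and follows essentially the same route as the paper: reduce the loop-algebra BCH problem for $E^0,F^1$ to the ordinary $\slTwo$ BCH problem, and lift back using the loop grading. The paper does this more tersely---it quotes the closed form $\log(e^{zE}e^{zF}) = \dfrac{4\arcsinh(z/2)}{\sqrt{4+z^2}}\bigl(E+F+\tfrac{z}{2}H\bigr)$ from the literature, expands in $z$, and then says ``taking into account the integer loop label'' to produce $E^s+F^{s+1}$ and $H^{s+1}$---whereas you spell out the grading-and-symmetry argument that justifies that last step and propose to redo the $2\times2$ matrix logarithm with a spectral parameter $\lambda$ rather than cite it. Your $\lambda$-evaluation is a slight repackaging (it keeps the loop degree visible throughout instead of reinserting it at the end), but it is exactly the same mechanism, and your careful remark that the grading already forces $Z_k$ into a one-dimensional space, so a single scalar comparison in $\slTwo$ suffices, is precisely the content the paper leaves implicit.
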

\begin{proof}
One can show that for the $\slAlg(2)$ algebra with generators $\{H,E,F\}$ the following holds~\cite{Weigert_1997,Engo_2001,Matone_2016}:
\begin{equation}
  \log(e^{z E}e^{z F}) = \frac{4 \arcsinh\frac{z}{2}}{\sqrt{4+z^2}} \left( E+F+\frac{z}{2} H \right).
\end{equation}
The series representation at $z=0$ of this expression is
\begin{equation}
  \log(e^{z E}e^{z F})
  = \sum_{s=0}^\infty (-1)^s \left[ \frac{z^{2s+1}}{2s+1}\frac{1}{\binom{2s}{s}} (E+F) + \frac{z^{2s+2}}{2s+2}\frac{1}{\binom{2s+1}{s}} H \right].
\end{equation}
This gives us all $Z_k$ in BCH series of the algebra $\slAlg(2)$. Then, taking into account the integer loop label of the generators, the BCH expansion for the $\slTwo$ loop algebra takes the following form
\begin{equation}
  \log(e^{z E^0}e^{z F^1})
  = \sum_{s=0}^\infty (-1)^s \left[ \frac{z^{2s+1}}{2s+1}\frac{1}{\binom{2s}{s}} (E^s+F^{s+1}) + \frac{z^{2s+2}}{2s+2}\frac{1}{\binom{2s+1}{s}} H^{s+1} \right].
\end{equation}

\end{proof}

Therefore, we conclude that Floquet Hamiltonian has the following form
\begin{equation} \label{logUF_BCH_series_result}
\log\TransferMatrix(z) = z \HamiltonianFloquet(z) = \sum_{s=0}^\infty (-1)^s \left[ \frac{z^{2s+1}}{2s+1}\frac{1}{\binom{2s}{s}} \qtp^{2s+1} + \frac{z^{2s+2}}{2s+2}\frac{1}{\binom{2s+1}{s}} \qtm^{2s+2} \right].
\end{equation}
The first elements of the series~(\ref{logUF_BCH_series_result}) are given by
\begin{equation}
\begin{aligned}
  & Z_1 = \qtp^1, \qquad\; Z_2 = \frac{1}{2} \qtm^2, \quad\; Z_3 = - \frac{1}{6} \qtp^3, \quad Z_4 = - \frac{1}{12} \qtm^4, \\
  & Z_5 = \frac{1}{30} \qtp^5, \quad Z_6 = \frac{1}{60} \qtm^6, \quad Z_7 = -\frac{1}{140} \qtp^7, \quad \cdots
\end{aligned}
\end{equation}
We also note that the Floquet Hamiltonian can be expressed in terms of the odd charges of the Floquet evolution operator,
\begin{equation}
  \log\TransferMatrix(z) = \sum_{s=0}^\infty \frac{z^{2s+1}}{2s+1}\binom{2s}{s}^{-1} \ChargeT_{2s+1}.
  \label{eq:FloqHam}
\end{equation}
\begin{remark}
  We note again that strictly speaking the charges $\ChargeT_m$ are defined only for $m<N$, therefore the above sums should be understood as exact expressions only when $N\rightarrow\infty$.
\end{remark}
\begin{remark}
  We note that the series may have only the finite radius of convergence~${\cal R}$. In the example considered in~section~\ref{sec:representation}, one has ${\cal R} = 1$. This is explained by the fact that the norm of the operator $Q_{2s+1}$ in equation \eqref{eq:FloqHam} is approximately
  \be
    \lVert Q_{2s+1} \rVert \sim \binom{2s}{s}.
  \ee
\end{remark}

\section{Charges and Floquet Hamiltonian in the representation of symplectic fermions} \label{sec:representation}
In this section we specify the relations found above to the model of symplectic fermions related to the $\gl(1|1)$ spin chain. Note that similar analysis can also be applied to the dimer model representation \cite{Morin_Duchesne_2016} of the $\TemperleyLieb_{N}(0)$.

\subsection{Symplectic fermions representation of the Temperley-Lieb algebra}
We study the $\gl(1|1)$ model with $T_{1} = T_{2}$. The representation is defined as
\begin{equation} \label{symplectic_fermions_rep}
    e_i = (f_i^\times+f_{i+1}^\times)(f_i + f_{i+1}).
\end{equation}
where the operator~$f_i$ ($f_i^{\times}$) annihilates (creates) a  so-called {\it symplectic fermion} on  the $i$th lattice site. The operators $f_i, f_i^{\times}$ obey the following anticommutation relations
\begin{equation}
  \{f_i,f_j\} = \{f_i^\times,f_j^\times\} = 0, \quad \{f_i,f_j^\times\} = (-1)^{i} \delta_{ij}.
\end{equation}
We emphasise that~$f_j$ and $f_j^{\times}$ are {\it not} Hermitian conjugates to each other. In terms of canonical fermionic creation and annihilation operators $c_j^{\dag}$ and $c_j$, symplectic fermions are given by
\begin{equation} \label{f-c-relation}
  f_j = i^j c_j, \quad f_j^\times = i^j c_j^\dag,
\end{equation}
where~$i$ is the imaginary unit.
One can show that any fermionic representation of the TL algebra with $\beta=0$ that is bilinear in fermionic creation and annihilation operators and acts nontrivially on two adjacent sites (i.e., $e_j$ acts only on sites $j$ and $j+1$) is equivalent to the representation~(\ref{symplectic_fermions_rep}) in terms of symplectic fermions, see Appendix~\ref{appendix:symplectic_fermions_are_unique} for the proof.

For later convenience, let us introduce the operators
\be \label{b_b_X_to_f_fX}
    b_j = f_j+f_{j+1}, \qquad b_j^{\times} = f_j^{\times} + f_{j+1}^{\times},
\ee
which satisfy the following (non-canonical) anticommutation relations:
\be \label{b_b_times_anticomm_rels}
     \{b_i,b_j\} = \{b_i^\times,b_j^\times\} = 0, \qquad\qquad
     \{b_i,b_j^\times\} = (-1)^i \left( \delta_{i, j+1} - \delta_{i,j-1} \right).
\ee
Then, using Eqs.~(\ref{symplectic_fermions_rep}) and (\ref{b_b_times_anticomm_rels}) we immediately obtain that in the representation~(\ref{symplectic_fermions_rep}) the operators~$\q_{i}^{m}$ from Eq.~(\ref{q_i_m_def}) become
\begin{equation}
\begin{aligned}
  &\q_i^1 = b_i^\times b_i,\\
  &\q_i^{2s+1} = (-1)^s (b_{i+2s}^\times b_i + b_i^\times b_{i+2s}), && (s>0),\\
  &\q_i^{2s+2} = (-1)^{s+i}(b_{i+2s+1}^\times b_i - b_i^\times b_{i+2s+1}) && (s\geq0).\\
\end{aligned}
\end{equation}
Therefore, for the operators~(\ref{q_pm_m_def}) one has
\begin{equation} \label{q_pm_s_symplectic_fermions_rep_1}
\begin{aligned}
  &{\mathfrak q}_{\pm}^1 = \sum_i (\pm 1)^{i} b_i^\times b_i, \\
  &{\mathfrak q}_{\pm}^{2 s + 1} = (-1)^s \sum_i (\pm 1)^{i} (b_{i+2s}^\times b_i + b_i^\times b_{i+2s}),   \qquad \qquad (s>0),\\
  &{\mathfrak q}_{\pm}^{2 s + 2} = (-1)^s \sum_i (\mp 1)^{i} (b_{i+2s+1}^\times b_i - b_i^\times b_{i+2s+1}), \qquad (s \geq 0).
\end{aligned}
\end{equation}
Then, using the Fourier transform
\begin{equation} \label{Fourier_transform}
  b_j = \frac{1}{\sqrt{N}}\sum_{p\in BZ} e^{i p j} b_p , \qquad b_j^\times = \frac{1}{\sqrt{N}}\sum_{p\in BZ} e^{- i p j} b_p^\times.
\end{equation}
where the sum is taken over Brillouin zone
\be \label{BZ_def}
    BZ = \{0,\varepsilon,\dots,2\pi-\varepsilon\}, \qquad \varepsilon = \frac{2\pi}{N},
\ee
and the momenta are defined modulo $2\pi$, the charges~${\mathfrak q}_{\pm}^{m}$ in Eq.~(\ref{q_pm_s_symplectic_fermions_rep_1}) can be written as
\begin{equation}
\begin{aligned}
  &\qp^1 = \sum_p b_p^\times b_p, \qquad
  \qp^{2s+1} = 2 (-1)^s \sum_p \cos2sp \; b_p^\times b_p, && (s>0),\\
  &\qp^{2s+2} = 2 (-1)^{s+1} \sum_p \cos{(2s+1) p} \; b_{p-\pi}^\times b_p, && (s\geq 0). \\
\end{aligned}
\label{qmin_symm_f_1}
\end{equation}
Likewise, for~$\qm^{m}$ one obtains
\begin{equation}
\begin{aligned}
  & \qm^1 = \sum_p b_{p-\pi}^\times b_p, \qquad
  \qm^{2s+1} = 2 (-1)^s \sum_p \cos 2s p \; b_{p-\pi}^\times b_p, && (s>0),\\
  & \qm^{2s+2} = 2i (-1)^{s+1} \sum_p \sin{(2s+1) p} \;  b_p^\times b_p, && (s\geq0).
\end{aligned}
\label{qmin_symm_f_2}
\end{equation}

\subsection{Floquet Hamiltonian in symplectic fermions representation}
With the help of Eqs.~(\ref{qmin_symm_f_1}) and~(\ref{qmin_symm_f_2}), for the operators~(\ref{tilde_q_m}) we obtain
\begin{equation}
  \qtp^{2s+1} = \sum_p \left( 2\sin{p} \right)^{2s} b_p^\times b_p, \qquad
  \qtm^{2s+2} = -i \sum_p\left(2\sin{p}\right)^{2s+1} b_p^\times b_p, \quad (s \geq 0).
\end{equation}
Therefore, using the general expression~(\ref{logUF_BCH_series_result}) of the Floquet Hamiltonian~${\cal H}_F$, in symplectic fermions representation we obtain
\begin{equation} \label{H_F_bXb}
  \HamiltonianFloquet(z) = \sum_{s=0}^\infty (-1)^s \left[ \frac{z^{2s}}{2s+1}\frac{1}{\binom{2s}{s}} \qtp^{2s+1} + \frac{z^{2s+1}}{2s+2}\frac{1}{\binom{2s+1}{s}} \qtm^{2s+2} \right] = \sum_{p \in BZ} \phi_p(z) b_p^\times b_p,
\end{equation}
where the thermodynamic limit is assumed and we denoted
\begin{equation}
  \phi_p(z) = \left( \frac{1-iz\sin{p}}{1+iz\sin{p}} \right)^{1/2} \frac{\arcsinh(z\sin{p})}{z\sin{p}}.
\end{equation}
Here $\phi_0(z) = \phi_\pi(z) = 1$, and Trotter limit corresponds to the average Hamiltonian $\Hamiltonian$:
\begin{equation}
  \lim_{z\rightarrow 0} \phi_p(z) = 1, \qquad \lim_{z\rightarrow 0}\HamiltonianFloquet(z) = \Hamiltonian.
\end{equation}
Note that despite its form, the Hamiltonian~(\ref{H_F_bXb}) is {\it not} diagonal, since $[b^{\times}_p b_p, b^{\times}_q b_q] \neq 0$ for~$p\neq q$. Moreover,~${\cal H}_F$ is not even diagonalisable since it is not normal, i.e. $[{\cal H}_F^{\dag}, {\cal H}_F] \neq 0$. Nevertheless, one can bring it to the Jordan normal form.

\subsection{Jordan normal form of the Floquet Hamiltonian}

Let us now proceed with reducing the Floquet Hamiltonian~(\ref{H_F_bXb}) to its Jordan normal form.
First of all, we rewrite the Floquet Hamiltonian~(\ref{H_F_bXb}) in terms of the Fourier components of symplectic fermions~$f_p$ and~$f_p^{\times}$. The latter are related to the operators~$b_p$ and~$b_p^{\times}$ as
\be
     b_p = (1+e^{i p})f_p, \qquad\qquad b_p^\times = (1+e^{-i p})f_p^\times,
\ee
where we used Eqs.~(\ref{b_b_X_to_f_fX}) and~(\ref{Fourier_transform}). Thus, the Floquet Hamiltonian can be written as
\be \label{Floquet_H_fXf}
    \HamiltonianFloquet(z) = \sum_{p=0}^{2\pi - \varepsilon} \frac{2}{z}\arcsinh(z\sin{p}) \cot(\frac{p}{2}) \left( \frac{1-iz\sin{p}}{1+iz\sin{p}} \right)^{1/2} f_{p}^{\times} f_{p},
\ee
where we took into account that the summation over momenta is taken over the Brillouin zone~(\ref{BZ_def}).

Then, following Ref.~\cite{Gainutdinov_Read_Saleur_2013_1} we introduce two fermionic modes [cf. Eq.~(\ref{f-c-relation})] which satisfy {\it canonical} anticommutation relations
\be
    \{ c_{k,\sigma}, c_{q,\sigma^{\prime}} \} = \{ c_{k,\sigma}^{\dag}, c_{q,\sigma^{\prime}}^{\dag} \} = 0, \qquad\qquad
    \{ c_{k,\sigma}, c_{q,\sigma^{\prime}}^{\dag} \} = \delta_{k,q} \delta_{\sigma, \sigma^{\prime}},
\ee
where~$\sigma, \sigma^{\prime} \in \{ +, - \}$ and $k\in\{0,\varepsilon, \dots, \pi-\varepsilon \}$, i.e. the Brillouin zone is ``halved'' as compared to Eq.~(\ref{BZ_def}).
In the notations of Ref.~\cite{Gainutdinov_Read_Saleur_2013_1} we have
\be \label{c_pm_to_eta_chi}
    c_{k,+} = \chi_k, \qquad c_{k,-} = \eta_k,
\ee
which we are going to use below.

Note that the square root in Eq.~(\ref{Floquet_H_fXf}) requires extra care: the argument inside of the root may become negative in case $\Abs{\tau} > 1$.
For this reason we examine the Floquet Hamiltonian in two separate regions.

\subsubsection{The region $\Abs{\tau} \leq 1$}
The Floquet Hamiltonian $\HamiltonianFloquet$ in this case has a behaviour similar to a regular Hamiltonian $\Hamiltonian$, but with a deformed spectrum. In this region it holds that
\begin{equation}
  \frac{1+\tau\sin(p)}{1-\tau\sin(p)} \geq 0,
\end{equation}
therefore the Floquet Hamiltonian (\ref{Floquet_H_fXf}) can be written as
\begin{equation}
\HamiltonianFloquet =  \sum_{p=\varepsilon}^{\pi-\varepsilon} \frac{2}{\tau}\arcsin[\tau\sin(p)] \left\{ t_p(\tau) f_p^\times f_p + t_p^{-1}(\tau) f_{p-\pi}^\times f_{p-\pi} \right\} + 4 f_0^\times f_0,
\end{equation}
where we separated zero modes, used~$\tau = i z$ [see Eq.~(\ref{tau_def})], and introduced the coefficient
\begin{equation}
  t_p(\tau) = \cot(\frac{p}{2}) \sqrt{\frac{1+\tau\sin(p)}{1-\tau\sin(p)}} \geq 0.
\end{equation}
The fermionic modes $\chi_p$ and $\eta_p$ are related to the symplectic fermions in the following way
\begin{equation} \label{f_to_eta_chi_RI}
\begin{aligned}
  &f_p = t_p^{-1/2}(\tau) \left( \frac{\chi_p + \eta_p}{\sqrt{2}} \right),
  &f_{p-\pi} = t^{1/2}_p(\tau) \left( \frac{\chi_p - \eta_p}{\sqrt{2}} \right), \\
  &f_p^\times = t_p^{-1/2}(\tau) \left( \frac{\chi_p^\dag - \eta_p^\dag}{\sqrt{2}} \right),
  &f_{p-\pi}^\times = t^{1/2}_p(\tau) \left( \frac{\chi_p^\dag + \eta_p^\dag}{\sqrt{2}} \right), \\
  &f_0=\eta_0, \quad f_0^\times=\chi_0^\dag, \quad f_\pi=\chi_0, \quad f_\pi^\times=\eta_0^\dag.
\end{aligned}
\end{equation}
The Floquet Hamiltonian~(\ref{Floquet_H_fXf}) reduces to its Jordan normal form in terms of the canonical fermions. Explicitly, it reads
\be \label{H_F_final}
  \HamiltonianFloquet = \sum_{p=\varepsilon}^{\pi-\varepsilon} \frac{2}{\tau}\arcsin[\tau\sin(p)] \left( c_{p,+}^{\dag} c_{p,+} - c_{p,-}^\dag c_{p,-} \right) + 4 c_{0,+}^\dag c_{0,-},
\ee
where the operators~$c_{p,\pm}$ are then related to~$\chi_p$ and $\eta_p$ via Eq.~(\ref{c_pm_to_eta_chi}).
Note that in the limit $\tau \rightarrow 1$ the spectrum is piecewise linear (see Fig.~\ref{fig:spectrum}).

\begin{figure}[h]
  \centering
  \includegraphics[width=0.75\textwidth]{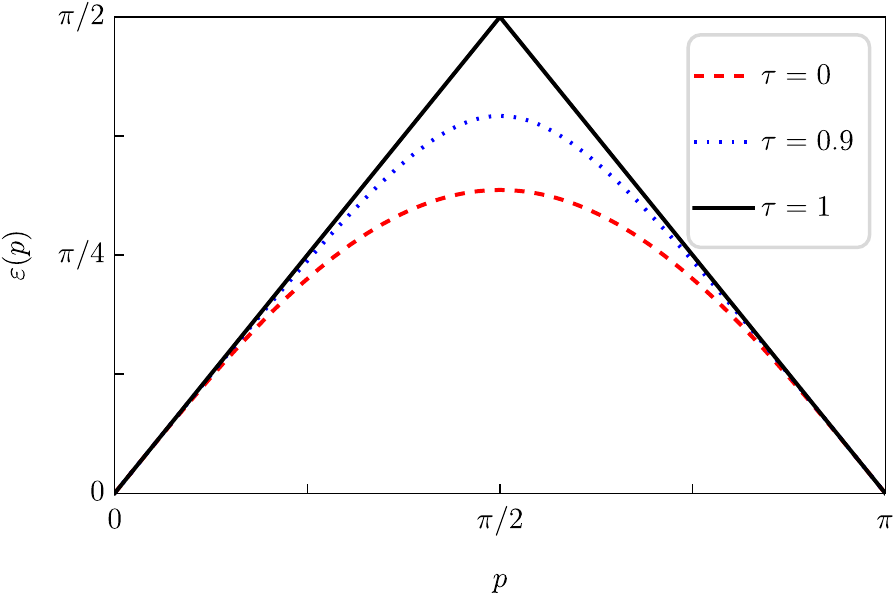}
  \caption{The Floquet Hamiltonian spectrum $\varepsilon(p) = \frac{1}{\tau}\arcsin[\tau\sin(p)]$ in the region $0<p<\pi$, $0<\tau<1$. In the Trotter limit $\tau \rightarrow 0$ one has~$\varepsilon(p) = \sin(p)$, while in the limit $\tau \rightarrow 1$ the spectrum is piecewise linear.}
  \label{fig:spectrum}
\end{figure}

\subsubsection{The region $\Abs{\tau} > 1$}
In this case, let us divide the momentum space into two intervals $\mathcal{I}_1 = \{p : \Abs{\tau\sin{p}} < 1\}$ and $\mathcal{I}_2 = \{p : \Abs{\tau\sin{p}} > 1\}$. Let us diagonalize the part of the Floquet Hamiltonian (\ref{Floquet_H_fXf}) with momenta inside of $\mathcal{I}_1$ by defining the two fermionic modes just as in Eq.~(\ref{f_to_eta_chi_RI}). Note that the zero momentum mode always lies inside of $\mathcal{I}_1$.

For the part of Floquet Hamiltonian corresponding to the interval $\mathcal{I}_2$ some adjustment has to be made. In this interval we choose the branch $z = i \tau e^{-i \varepsilon}, \varepsilon \rightarrow 0$ corresponding to the inverse Wick rotation. Then, we obtain
\begin{equation}
\begin{gathered}
  \arcsin[\tau\sin(p)] = \frac{\pi}{2}-i\arccosh[\tau\sin(p)],\\
  \sqrt{\frac{1+\tau\sin(p)}{1-\tau\sin(p)}} = -i \Abs{\frac{1+\tau\sin(p)}{1-\tau\sin(p)}}^{1/2}.
\end{gathered}
\end{equation}
Therefore, the part of the Floquet Hamiltonian (\ref{H_F_bXb}) is given by
\begin{equation}
  \sum_{p\in \mathcal{I}_2} \frac{2}{\tau}\arcsin[\tau\sin(p)] \left\{ -i t_p(\tau) f_p^\times f_p + i t_p^{-1}(\tau) f_{p-\pi}^\times f_{p-\pi} \right\},
\end{equation}
where this time the coefficient~$t_p(\tau)$ reads
\begin{equation}
  t_p(\tau) = \cot(\frac{p}{2}) \Abs{\frac{1+\tau\sin(p)}{1-\tau\sin(p)}}^{1/2} \geq 0,
\end{equation}
and in the region~$\mathcal{I}_2$ we define the $\chi_p,\eta_p$ fermions in the following way:
\begin{equation} \label{f_to_eta_chi_RII}
\begin{aligned}
  &f_p = t_p^{-1/2}(\tau) \left(\frac{\chi_p+\eta_p}{\sqrt{2}}\right),
  &f_{p-\pi} = t_p^{1/2}(\tau) \left(-i \frac{\chi_p-\eta_p}{\sqrt{2}}\right), \\
  &f_p^\times = t_p^{-1/2}(\tau) \left(i \frac{\chi_p^\dag-\eta_p^\dag}{\sqrt{2}}\right),
  &f_{p-\pi}^\times = t_p^{1/2}(\tau)\left(\frac{\chi_p^\dag+\eta_p^\dag}{\sqrt{2}}\right).
\end{aligned}
\end{equation}
We therefore once again obtain a Hamiltonian of the form~(\ref{H_F_final}). However, note that inside the interval $\mathcal{I}_2$ the spectrum becomes complex. We believe that this can be interpreted as a phase transition associated with the fact that in this regime the series (\ref{eq:FloqHam}) fails to converge. The question of analytic continuation is not considered here.

\section{Discussion and outlook} \label{sec:comments_and_outlook}

We studied a particular realization of an integrable Floquet protocol corresponding to the case of periodic Temperley-Lieb algebra $\PeriodicTemperleyLieb_N(0)$. We found an underlying loop algebra structure of conserved charges for the evolution operator and obtained closed form expression for the Floquet Hamiltonian in terms of symplectic fermions.

The results of our analysis could perhaps be also interpreted in terms of nontrivial Floquet-integrable logarithmic conformal field theory. Indeed, the average Hamiltonian of our protocol is the Log-CFT Hamiltonian discussed in the literature. Moreover the loop algebra structure of conserved charges is consistent with the Log-CFT. It was recently observed that a large class of Floquet-driven CFTs are integrable in some sense \cite{Wen_Wu_2018,Wen_et_al_2021,Lapierre_et_al_2020_1,Lapierre_et_al_2020_2,Ageev_Bagrov_Iliasov_2021}. It is therefore interesting to study further generalizations of these results to Log-CFT case.

We would like to emphasize that all previous studies of logarithmic CFTs were related to {\it equilibrium} statistical problems. On the contrary, we propose for the first time a realization of Log-CFT in the context of a {\it non-equilibrium}, Floquet driven system. The spectrum is linear at the points $q=0,\pi$, which, combined with the affine algebra, clearly indicates that the system is a relativistic CFT.
In addition, we propose an infinite family of conserved charges, which to the best of our knowledge is a new information in the context of Log-CFT (whether it is equilibrium or non-equilibrium).

Our results also point towards some sort of compact-noncompact phase transition in terms spreading of a support for the Floquet Hamiltonian, which is related to the divergence of a series expansion for the Floquet Hamiltonian.

It would be important also to generalize our approach to the case of the Temperley-Lieb algebras with arbitrary loop parameter $\beta$, corresponding e.g to the spin-1/2 $XXZ$-model. Even though we were not yet able to obtain the relations necessary for this type of analysis, some numerical experiments as well as alternative analytic approaches~\cite{Mansson_2008,Nienhuis_Huijgen_2021} show some promise in this direction.

\section*{Acknowledgements}
\addcontentsline{toc}{section}{Acknowledgements}
The authors are grateful to D.S. Ageev, J.L. Jacobsen, Y. Miao, I.D. Motorin, and B. Nienhuis for useful discussions.
The work of V.I.Y. was performed at the Steklov International Mathematical Center and supported by the Ministry of Science and Higher Education of the Russian Federation (agreement no. 075-15-2022-265).
The work by V.G. is part of the DeltaITP consortium, a program of the Netherlands Organization for Scientific Research (NWO) funded by the Dutch Ministry
of Education, Culture and Science (OCW).
This study is also supported by the Russian Science Foundation (Grant No. 20-42-05002, work of D.V.K. and A.K.F.) and by the Priority 2030 program at the National University of Science and Technology MISIS.

\begin{appendix}

\section{Lie algebraic structure of the $\PeriodicTemperleyLieb_N (0)$} \label{appendix:lie_algebra}

In this Appendix we investigate the properties of Lie algebra generated by the commutators of $\PeriodicTemperleyLieb_N(0)$.

\subsection{Lie algebra of $\tl$}

\begin{property}
  By definition, $[e_i,e_{i+1}] = \q_i^2$ and $[e_i,e_j]=0$ if $|i-j|>1$.
\end{property}

\begin{property}
  The second-order commutators act as
  \begin{equation}
  \begin{aligned}
    &[[e_i,e_{i+1}],e_{i+1}] = [e_{i+1},[e_{i+1},e_i]] = -2e_{i+1}, \\
    &[[e_{i+1},e_i],e_i] = [e_i,[e_i,e_{i+1}]] = -2e_i,
  \end{aligned}
  \end{equation}
  therefore
  \begin{equation}
    \ad_{e_i}^3 e_{j} = 0
  \end{equation}
\end{property}
\begin{proof}
Trivial computation. For example,
  \begin{equation}
    [[e_i,e_{i+1}],e_{i+1}] = e_i e_{i+1} e_{i+1} - e_{i+1} e_i e_{i+1} - e_{i+1} e_i e_{i+1} + e_{i+1} e_{i+1} e_i = - 2 e_{i+1}
  \end{equation}
\end{proof}

\begin{property}
  For any $h\in\PeriodicTemperleyLieb_N(0)$ the following holds
  \begin{equation}
    \ad_{e_i}^2 h = -2 e_i h e_i, \qquad \ad_{e_i}^3 h = 0.
  \end{equation}
\end{property}
\begin{proof}
  \begin{equation}
    \ad_{e_i}^2 h = e_i(e_i h - h e_i) - (e_i h - h e_i) e_i = 0 - e_i h e_i - e_i h e_i + 0 = -2 e_i h e_i, \\
  \end{equation}
  \begin{equation}
    \ad_{e_i}^3 h = -2 e_i^2 h e_i + 2 e_i h e_i^2 = 0.
  \end{equation}
\end{proof}

\begin{property}
  There is some freedom in the ways how to set up brackets in $\q_i^m$:
  \begin{equation}
    [[\cdots[[e_i,e_{i+1}],e_{i+2}],\cdots],e_{i+m-1}] = [e_i,[e_{i+1},\cdots,[e_{i+m-2},e_{i+m-1}] \cdots]]
  \end{equation}
\end{property}
\begin{proof}
  Let us prove by induction on $m$. The cases $m=1$ and $m=2$ are trivial. The induction step is
  \begin{equation}
  \begin{aligned}
    &[e_i,[e_{i+1},\cdots,[e_{i+m-2},e_{i+m-1}]\cdots]] \\
    &= [e_i,[[\cdots[[e_{i+1},e_{i+2}],\cdots],e_{i+m-1}]] \\
    &= \left/ \text{Jacobi rule and} \; [e_i, e_{i+s}] = 0 \; \text{if} \; s > 1 \right/ \\
    &= [[\cdots[[e_i,e_{i+1}],e_{i+2}],\cdots],e_{i+m-1}]. \\
  \end{aligned}
  \end{equation}
\end{proof}

\begin{property}
  It is possible to calculate the commutators $[e_j,\q_i^m]$.
  \begin{equation}
  \begin{aligned}
    &[e_j,\q_i^1] = \delta_{j,i-1}\q_{i-1}^2 - \delta_{j,i+1} \q_{i}^2, &   \\
    &[e_j,\q_i^2] = \delta_{j,i-1}\q_{i-1}^3 - 2\delta_{j,i}\q_i^1 + 2\delta_{j,i+1}q_{i+1}^1 - \delta_{j,i+2}\q_i^3, &   \\
    &[e_j,\q_i^3] = \delta_{j,i-1}\q_{i-1}^4 + \delta_{j,i+1}(\q_{i+1}^2-\q_i^2) - \delta_{j,i+3}\q_i^4, &   \\
    &[e_j,\q_i^m] = \delta_{j,i-1}\q_{i-1}^{m+1} + \delta_{j,i+1}\q_{i+1}^{m-1} - \delta_{j,i+m-2}\q_{i}^{m-1} - \delta_{j,i+m}\q_i^{m+1}, &  m>3. \\
  \end{aligned}
  \end{equation}
\end{property}

\subsection{Lie algebra of $\tlpm$}
\begin{property}
  By summation we conclude that
  \begin{equation}
  \begin{aligned}
    &[e_j,\qp^1] = \q_j^2 - \q_{j-1}^2, \\
    &[e_j,\qp^2] = \q_j^3 - \q_{j-2}^3, \\
    &[e_j,\qp^m] = \q_j^{m+1} - \q_{j-m}^{m+1} + \q_j^{m-1} - \q_{j-m+2}^{m-1}, \qquad m\geq3; \\
    \\
    &[e_j,\qm^1] = (-1)^j \left( - \q_j^2 + \q_{j-1}^2 \right), \\
    &[e_j,\qm^2] = (-1)^j \left( - \q_j^3 - \q_{j-2}^3 - 4 \q_j^1 \right),\\
    &[e_j,\qm^m] = (-1)^j \left( -\q_j^{m+1} - (-1)^m\q_{j-m}^{m+1} - \q_j^{m-1} - (-1)^m\q_{j-m+2}^{m-1} \right), \qquad m\geq3 \\
  \end{aligned}
  \end{equation}
\end{property}
Using this property, we obtain the relations~(\ref{eq:most_important_equation}).

\begin{property} \label{property:base_case}
  All elements $\{\q_\pm^{2s+1},\qm^{2s+2}\}_s$ are generated as commutators of $\{\HamiltonianEven,\HamiltonianOdd\}$, because
  \begin{equation}
  \begin{aligned}
    &[\HamiltonianEven,\qtm^{2s}] = -2\qte^{2s+1},      &&[\HamiltonianOdd,\qtm^{2s}] = 2\qto^{2s+1}, \\
    &[\HamiltonianEven,\qte^{2s+1}] = 0,                &&[\HamiltonianOdd,\qte^{2s+1}] = -\qtm^{2s+2}, \\
    &[\HamiltonianEven,\qto^{2s+1}] = \qtm^{2s+2},      &&[\HamiltonianOdd,\qto^{2s+1}] = 0. \\
  \end{aligned}
  \end{equation}
\end{property}
\begin{proof}
  Directly follows from~(\ref{eq:most_important_equation}).
\end{proof}
\begin{property}
  If $m$ is even, then $\qp^m$ commutes with $\HamiltonianEven$ and $\HamiltonianOdd$, therefore also with all elements generated by them.
\end{property}

\begin{property}
  All the elements $\qp^m$ commute.
\end{property}
\begin{proof}
  It was proven earlier in \cite{Gainutdinov_Read_Saleur_2013_1,Koo_Saleur_1994}.
\end{proof}

Now, let us adopt the notation~(\ref{eq:definition_loop_algebra}) and prove the relations~(\ref{eq:relations_loop_algebra}).
\begin{property}
  The loop algebra relations~(\ref{eq:relations_loop_algebra}) hold.
\end{property}
\begin{proof}
  Let us prove by induction in terms of the parameter $l = \min(n,m)$ for the commutators of~(\ref{eq:relations_loop_algebra}). The base case $l=0$ and $l=1$ is essentially proven in Property~\ref{property:base_case}. Now, let us suppose that relations~(\ref{eq:relations_loop_algebra}) hold up to $\min(n,m) = l$. The induction step is easily proven via Jacobi relations. Let us give some examples: Suppose $n\leq m$, then
  \begin{equation}
  \begin{aligned}
   ~ [H^{n+1},H^m]
    &= \bigl[ [E^n,F^1], H^m \bigr] = \bigl[ [E^n,H^m],F^1 \bigr] + \bigl[ E^n,[F^1,H^m] \bigr] \\
    &= -2[E^{n+m},F^1]+2[E^n,F^{m+1}] = -2H^{n+m+1}+2H^{n+m+1} = 0,\\
     ~ [E^{n+1},E^m]
    &=\frac{1}{2}\bigl[ [H^n,E^1],E^m \bigr] = \frac{1}{2} \bigl
    [ [H^n,E^m],E^1 \bigr] = [E^{n+m},E^1] = 0,\\
    ~[H^{n+1},E^m]
    &= \bigl[ [E^n,F^1], E^m \bigr] = \bigl[ [E^m,F^1], E^n \bigr] = [H^{m+1},E^n] = 2E^{n+m+1},\\
    ~[E^{n+1},F^m]
    &= \frac{1}{2} \bigl[ [H^n,E^1], F^m \bigr] = \frac{1}{2}\bigl[ [H^n,F^m], E^1 \bigr] + \frac{1}{2}\bigl[ H^n, [E^1,F^m] \bigr] \\
    &= -[F^{n+m},E^1] + \frac{1}{2}[H^n,H^{m+1}] = H^{n+m+1}.
  \end{aligned}
  \end{equation}
 One can similarly verify the relations for all other cases.
\end{proof}

\section{Symplectic fermions are unique} \label{appendix:symplectic_fermions_are_unique}

Let us consider some hypothetical representation $\xi(e)$ of a $\TemperleyLieb_N(0)$ (with open boundary conditions) that satisfies the following set of conditions:
\begin{enumerate}
  \item $\xi(e)$ is quadratic in terms of fermionic operators,
  \item $\xi(e)$ is local, i.e. $\xi(e_i)$ acts only on sites $i$ and $i+1$,
  \item $\xi(e)$ is invariant under the shift by $2$ sites.
\end{enumerate}
We aim to prove that all representations satisfying the properties listed above lead to symplectic fermions. We believe that the third condition is generally not essential, but we use is for simplicity.

Let us denote
\begin{equation}
  \mathbf{c}_i = \left( c_i , c_i^\dag , c_{i+1} , c_{i+1}^\dag \right)^T.
\end{equation}
A generic form of quadratic representations is given by
\begin{equation}
\begin{aligned}
  &\xi(e_i) = \frac{1}{2} \mathbf{c}_i^T G_e \mathbf{c}_i + v_e^T \mathbf{c}_i + \chi_e, \quad \text{if }i\text{ even}, \\
  &\xi(e_i) = \frac{1}{2} \mathbf{c}_i^T G_o \mathbf{c}_i + v_o^T \mathbf{c}_i + \chi_o, \quad \text{if }i\text{ odd}.
\end{aligned}
\end{equation}
where $G_\alpha$ is some antisymmetric matrix corresponding to quadratic terms, $v_\alpha$ is a vector for linear terms, $\chi_\alpha$ is a constant.

The commutativity relation $e_i e_j = e_j e_i, \; |i-j|>1$ leads to $v_e = v_o = 0$. The relation $e_i^2 = 0$ leads to $\chi_e = \chi_o = 0$.
The remaining set of the Temperley-Lieb relations leads to the following {\it two} possible forms of $G_e$ and $G_o$:

\subsection{First case}
\begin{equation}
\begin{gathered}
  G_e = \alpha_e
  \begin{bmatrix}
    0 & -1 & 0 & -t_e^{-1} \\
    1 & 0 & -t_e & 0 \\
    0 & t_e & 0 & 1 \\
    t_e^{-1} & 0 & -1 & 0 \\
  \end{bmatrix}
  , \qquad
  G_o = \alpha_o
  \begin{bmatrix}
    0 & -1 & 0 & -t_o^{-1} \\
    1 & 0 & -t_o & 0 \\
    0 & t_o & 0 & 1 \\
    t_o^{-1} & 0 & -1 & 0 \\
  \end{bmatrix}
  ,
  \\
  \alpha_e \alpha_o = -1.
\end{gathered}
\end{equation}
The resulting generators of the $\TemperleyLieb$ factorize as
\begin{equation}
\begin{aligned}
  &\xi(e_i) = \alpha_e (c_i^\dag + t_e^{-1} c_{i+1}^\dag) (c_i - t_e c_{i+1}), \quad \text{if }i\text{ even}, \\
  &\xi(e_i) = \alpha_o (c_i^\dag + t_o^{-1} c_{i+1}^\dag) (c_i - t_o c_{i+1}), \quad \text{if }i\text{ odd}, \\
  & \alpha_e \alpha_o = -1.
\end{aligned}
\end{equation}
Let us introduce the symplectic fermions as
\begin{equation}
\begin{gathered}
  f_i = x_i c_i, \qquad f_i^\times = (-1)^i \frac{1}{x_i} c_i^\dag, \\
  \frac{x_{i+1}}{x_i} = \begin{cases} -t_e, & \text{if }i\text{ even,} \\ -t_o, & \text{if }i\text{ odd.}  \end{cases}
\end{gathered}
\end{equation}
Here $x_i$ are some constants (depending on single variable $x_0$). Then the representation is expressed as
\begin{equation}
\begin{aligned}
  &\xi(e_i) = \alpha_e (f_i^\times + f_{i+1}^\times)(f_i + f_{i+1}), \quad \text{if }i\text{ even}, \\
  &\xi(e_i) = \frac{1}{\alpha_e} (f_i^\times + f_{i+1}^\times)(f_i + f_{i+1}), \quad \text{if }i\text{ odd}.
\end{aligned}
\end{equation}
The freedom of choosing the constants $\alpha_e$ and $1/\alpha_e$ can be eliminated by using the symmetry~(\ref{eq:automorphism}).

\subsection{Second case}

\begin{equation}
\begin{gathered}
  G_e = \alpha_e
  \begin{bmatrix}
    0 & -1 & -t_e^{-1} & 0 \\
    1 & 0 & 0 & -t_e \\
    t_e^{-1} & 0 & 0 & -1 \\
    0 & t_e & 1 & 0 \\
  \end{bmatrix}
  , \qquad
  G_o = \alpha_o
  \begin{bmatrix}
    0 & 1 & -t_o & 0 \\
    -1 & 0 & 0 & -t_o^{-1} \\
    t_o & 0 & 0 & 1 \\
    0 & t_o^{-1} & -1 & 0 \\
  \end{bmatrix},
  \\
  \alpha_e \alpha_o = -1.
\end{gathered}
\end{equation}
The resulting generators of the $\TemperleyLieb$ factorize as
\begin{equation}
\begin{aligned}
  &\xi(e_i) = \alpha_e (c_i^\dag + t_e^{-1} c_{i+1}) (c_i - t_e c_{i+1}^\dag), \quad \text{if }i\text{ even}, \\
  &\xi(e_i) = \alpha_o (c_i^\dag + t_o^{-1} c_{i+1}) (c_i - t_o c_{i+1}^\dag), \quad \text{if }i\text{ odd}, \\
  & \alpha_e \alpha_o = -1.
\end{aligned}
\end{equation}
Let us introduce the symplectic fermions as
\begin{equation}
\begin{aligned}
  f_i &= x_i
  \begin{cases} c_i, & i\text{ even,} \\  c_i^\dag, & i\text{ odd.}
  \end{cases}, \qquad
  f_i^\times = (-1)^i \frac{1}{x_i}
  \begin{cases} c_i^\dag,  & i\text{ even,} \\ c_i, & i\text{ odd.}
  \end{cases}, \\
  \frac{x_{i+1}}{x_i} &= \begin{cases} -t_e, & \text{if }i\text{ even,} \\ -t_o, & \text{if }i\text{ odd.}  \end{cases}
\end{aligned}
\end{equation}
Here $x_i$ are some constants which depend on a single variable $x_{o}$. The representation is given by
\begin{equation}
    \begin{aligned}
  &\xi(e_i) = \alpha_e (f_i^\times + f_{i+1}^\times)(f_i + f_{i+1}), \quad \text{if }i\text{ even}, \\
  &\xi(e_i) = \frac{1}{\alpha_e} (f_i^\times + f_{i+1}^\times)(f_i + f_{i+1}), \quad \text{if }i\text{ odd},
\end{aligned}
\end{equation}
and the coefficients $\alpha_e$, $1/\alpha_e$ can be eliminated using the automorphism~(\ref{eq:automorphism}).

\end{appendix}

\bibliography{biblio}

\end{document}